\def\endthebibliography{%
	\def\@noitemerr{\@latex@warning{Empty `thebibliography' environment}}%
	\endlist
}
\definecolor{mediumblue}{rgb}{0.0, 0.0, 0.8}
\definecolor{mediumcandyapplered}{rgb}{0.89, 0.02, 0.17}
\definecolor{nazar}{rgb}{0.7, 0.5, 0.9}
\let\NAT@parse\undefined
\definecolor{lightblue}{rgb}{0.30,0.75,0.93}
\newcommand{\greensquare}{\tikz\fill[green!20!white] (0,0) rectangle (2mm,2mm);}
\newcommand{\redsquare}{\tikz\fill[red!20!white] (0,0) rectangle (2mm,2mm);}
\newcommand{\reddsquare}{\tikz\fill[red!70!white] (0,0) rectangle (2mm,2mm);}
\newcommand{\bluesquare}{\tikz\fill[lightblue!60!white] (0,0) rectangle (2mm,2mm);}
\newcommand{\dashedline}[1]{\tikz \draw [black, thick, dashed] (0,0) -- (#1,0);}
\newcommand{\dashedlinea}[1]{\tikz \draw [blue, thick, dashed] (0,0) -- (#1,0);}
\DeclareRobustCommand\sampleline[1]{%
	\tikz\draw[#1] (0,0) (0,\the\dimexpr\fontdimen22\textfont2\relax)
	-- (2em,\the\dimexpr\fontdimen22\textfont2\relax);%
}
\newcounter{theorem}
\newcounter{definition}
\newcounter{lemma}
\newcounter{claim}
\newcounter{problem}
\newcounter{proposition}
\newcounter{corollary}
\newcounter{construction}
\newcounter{example}
\newcounter{xca}
\newcounter{comments}
\newcounter{remark}
\newcounter{assumption}
\newtheorem{theorem}[theorem]{Theorem}
\newtheorem{lemma}[lemma]{Lemma}
\newtheorem{problem}[problem]{Problem}
\newtheorem{definition}[definition]{Definition}
\newtheorem{remark}[remark]{Remark}
\newtcolorbox{resp}[1][]{%
	enhanced jigsaw,%
	colback=gray!5!white,%
	colframe=gray!80!black,%
	size=small,%
	boxrule=1pt,%
	halign title=flush center,%
	coltitle=black,%
	breakable,%
	drop shadow=black!50!white,%
	attach boxed title to top left={xshift=1cm,yshift=-\tcboxedtitleheight/2,yshifttext=-\tcboxedtitleheight/2},%
	minipage boxed title=3cm,%
	boxed title style={%
		colback=white,%
		size=fbox,%
		boxrule=1pt,%
		boxsep=2pt,%
		underlay={%
			\coordinate (dotA) at ($(interior.west) + (-0.5pt,0)$);
			\coordinate (dotB) at ($(interior.east) + (0.5pt,0)$);
			\begin{scope}[gray!80!black]
				\fill (dotA) circle (2pt);
				\fill (dotB) circle (2pt);
			\end{scope}
		}%
	},%
	#1%
}
\newcommand{\R}{{\mathbb{R}}}
\newcommand{\N}{{\mathbb{N}}}
\def\@opargbegintheorem#1#2#3{\textit{#1\ #2} \textit{(#3):}}
\newcommand{\CLbr}{\tilde{A} \tilde{\mathbb{R}}(x)\mathbb{Q}(x)}
\title{Learning Robust Safety Controllers for Uncertain Input-Affine Polynomial Systems
\thanks{}}
\author{Omid Akbarzadeh, \IEEEmembership{Student Member,~IEEE}, MohammadHossein Ashoori,  \IEEEmembership{Student Member,~IEEE}, \\and Abolfazl Lavaei, \IEEEmembership{Senior Member,~IEEE}
\thanks{All the authors are with the School of Computing, Newcastle University, United Kingdom. Emails: \{\texttt{\href{mailto:o.akbarzadeh2@newcastle.ac.uk}{o.akbarzadeh2},\href{mailto:m.ashoori2@newcastle.ac.uk}{m.ashoori2}, \href{mailto:abolfazl.lavaei@newcastle.ac.uk}{abolfazl.lavaei}}\}\texttt{@newcastle.ac.uk}}
}
\pretocmd{\caption}{\gdef\@currentlabel{\arabic{figure}}}{}{}
\begin{document}
\maketitle

\begin{abstract}%
This paper offers a \emph{direct data-driven} approach for learning \emph{robust control barrier certificates (R-CBCs)} and \emph{robust safety controllers (R-SCs)} for discrete-time input-affine polynomial systems with \emph{unknown dynamics} under \emph{unknown-but-bounded disturbances}. The proposed method relies on data from input-state observations collected over a finite-time horizon while satisfying a specific \emph{rank condition} to ensure the system is persistently excited. Our data-driven scheme enables the synthesis of R-CBCs and R-SCs directly from observed data, bypassing the need for explicit modeling of the system's dynamics and thus ensuring robust system safety against disturbances within an infinite time horizon. Our proposed approach is formulated as a \emph{sum-of-squares (SOS)} optimization problem, providing a structured design framework. Two case studies showcase our method's capability to provide robust safety guarantees for unknown input-affine polynomial systems under bounded disturbances, demonstrating its practical effectiveness.
\end{abstract}

\section{Introduction}
Many complex and heterogeneous systems, particularly those in safety-critical applications~\cite{mcgregor2017analysis}, lack explicit mathematical models due to their inaccessibility or excessive complexity. This absence of models poses significant challenges for traditional model-based methods, which rely on detailed system knowledge for effective analysis and control of dynamical systems. To address this critical challenge, the past decade has witnessed a remarkable growth in data-driven approaches for analyzing and controlling systems with unknown dynamics. These methods utilize observed data to gain insights into system behavior, enabling effective analysis without the need for explicit mathematical models.

Data-driven techniques are generally classified into two categories: \emph{indirect} and \emph{direct} methods~\cite{Hou2013model, de2019formulas,dorfler2022bridging}. Indirect methods begin by approximating the unknown system through system identification techniques, aiming to construct a mathematical model from observed data. This model facilitates the use of traditional model-based control strategies, leveraging established theories and tools in the existing model-based literature. However, the \emph{two-step} process, which involves accurately identifying the system and subsequently applying control methods to the derived model, can be computationally demanding, especially for real-life applications with complex dynamics~(\emph{e.g.,}~\cite{Indrict-data-driven1,Indrict-data-driven}).

In contrast, \emph{direct} data-driven methods circumvent the need for explicit model construction by utilizing system measurements directly to perform analysis and control tasks, effectively operating on the data without translating it into a traditional model. By eliminating the model identification phase, direct methods reduce computational overhead and minimize potential inaccuracies caused by model approximation errors. This makes them especially beneficial for complex systems where constructing an explicit model is infeasible or computational resources are constrained~\cite{TANASKOVIC20171,direct-data}.

In recent years, ensuring the safety of complex dynamical systems has emerged as a key focus in control theory. To address this problem, various methods have been introduced in the literature, including the concept of control barrier certificates~\cite{10.1007/978-3-540-24743-2_32,wieland2007constructive}. In particular, barrier certificates operate by imposing a set of inequalities on the system’s state and dynamics, analogous to a Lyapunov function. By defining a level set that effectively prevents system trajectories from entering unsafe regions, starting from a specified initial set, a barrier certificate provides a formal safety guarantee over the dynamical system. However, a key limitation of these methods is their reliance on having a precise mathematical model of the system \cite{ames2019control,clark2021control,Mahathi,nejati2024reactive,lavaei2024scalable}. 

To address this challenge, recent developments have centred on data-driven techniques that offer safety assurances for systems with unknown dynamics using \emph{scenario approaches}~\cite{calafiore2006scenario,salamati2024data,nejati2023data}. However, these methods rely on data being independent and identically distributed (i.i.d.), necessitating the collection of numerous trajectories—potentially millions in practical scenarios—from the unknown system. In contrast, our approach adopts a \emph{non-i.i.d.} trajectory-based framework, requiring only a \emph{single input-state trajectory} from the unknown system over a specific time horizon, provided that the system is persistently excited~\cite{willems2005note,samari2025data} (see \cite{markovsky2021behavioral} for a recent review on the fundamental lemma related to persistency of excitation). This method offers a significant advantage by eliminating the need for multiple independent trajectories, making it more practical in situations where obtaining several distinct trajectories is challenging.

Some recent studies, including~\cite{LUPPI2024100914,9966144, BISOFFI20203953}, have proposed data-driven methods to design state-feedback controllers that ensure robust invariance of compact polyhedral sets containing the origin. Although these approaches show promise, they tend to be conservative, as controllers might be feasible for \emph{certain regions} within the polyhedral set, even if a solution for the whole set is unattainable. Furthermore, while the methods in~\cite{LUPPI2024100914,9966144, BISOFFI20203953} focus on achieving robust invariance of a compact polyhedral set, our approach adopts a more general notion of safety. Specifically, our method accommodates \emph{multiple} unsafe regions and ensures that the unknown system, starting from an initial set, avoids reaching these unsafe regions (cf. both case studies).

It is worth noting that most data-driven techniques in the literature that use trajectory-based approaches, (\emph{e.g.,}~\cite{LUPPI2024100914,nejati2022data,zaker2025data,akbarzadeh2024data}), have been developed for  control analysis of \emph{continuous-time} nonlinear systems. While it may seem that addressing continuous-time systems poses more challenges, developing data-driven methods via the trajectory-based approach for \textit{discrete-time} unknown systems proves to be even more complex (see~\cite{martin2023guarantees}). Specifically, the approach proposed in the literature (\emph{e.g.,} \cite{nejati2022data} in the continuous-time setting) suggests $\mathcal{B}(x) = \mathcal{R}(x)^\top P \mathcal{R}(x)$ as a candidate control barrier certificate, where $\mathcal{R}(x)$ represents the system's monomials. The corresponding one-step ahead of the barrier function is $\mathcal{B}(x^+) = \mathcal{R}(x^+)^\top P \mathcal{R}(x^+)$, where $x^+$ is the one-step evolution of the system in discrete time. However, this approach encounters a fundamental obstacle for \emph{discrete-time} nonlinear polynomial systems, as the closed-loop data-based representation of $\mathcal{R}(x^+)$ is not straightforward. 

While a recent work \cite{samari2024} explores a data-driven approach for discrete-time nonlinear systems, our method differs in three key aspects. First, \cite{samari2024} focuses solely on discrete-time nonlinear polynomial systems, whereas our approach provides \emph{robust safety guarantees} for systems subject to \emph{unknown-but-bounded disturbances}. Second, \cite{samari2024} addresses only input-affine systems with a static control matrix $B$, whereas our method allows this matrix to be state-dependent, denoted by $\mathcal G(x)$, making it more adaptable to a broader range of scenarios. Lastly, the approach in \cite{samari2024} employs a conservative notion of barrier certificates, requiring the barrier to decay at each time step. In contrast, inspired by the $c$-martingale property in stochastic systems, we adopt a more relaxed barrier that permits decay relative to a positive constant $c$ (cf. condition \eqref{subeq: decreasing}), thus significantly increasing the likelihood of finding such a function in practice.

{\bf Primary Contributions.} Motivated by the discussed challenges, we develop a \emph{direct} data-driven methodology for discrete-time input-affine polynomial systems with \emph{unknown dynamics} and \emph{unknown-but-bounded} disturbances, enabling the design of R-CBCs and corresponding R-SCs directly from observed input-state data.  Our method relies on data collected over a finite time horizon, while fulfilling a specific \emph{rank condition} to ensure the system remains persistently excited~\cite {willems2005note}. By leveraging this data, our approach directly constructs R-CBCs and R-SCs, thereby providing robust safety guarantees against disturbances within an infinite time horizon. This is achieved through a structured framework based on SOS optimization. We apply our approach to two case studies, showcasing our method's ability to ensure robust safety for unknown input-affine polynomial systems under \emph{bounded disturbances}.

\section{System Description}
\subsection{Notation}
The sets of real numbers, non-negative and positive real numbers are denoted by $\mathbb{R}$, $\mathbb{R}^+_0$, and $\mathbb{R}^+$\!, respectively. Similarly, $\mathbb{N} = \{0, 1, 2, \ldots\}$ and $\mathbb{N}^+ = \{1, 2, \ldots\}$ represent the sets of non-negative and positive integers, respectively. The identity matrix of size $n \times n$ is expressed as $\mathds{I}_n$. Given \( N \) vectors \( x_i \in \mathbb{R}^{n_i} \), the column vector \( x \) is formed by stacking the vectors as \( x = [x_1; \dots; x_N] \). Moreover, we use $[x_1\;\; \ldots\;\; x_N]$ to represent horizontal concatenation of vectors $x_i\in\R^n$ to form an $n\times N$ matrix. A \emph{symmetric} matrix $P$ is positive definite when $P \succ 0$, while $P \succeq 0$ signifies that $P$ is a \emph{symmetric} positive semi-definite matrix. The transpose of a matrix $P$ is written as $P^\top$, and its left pseudoinverse is denoted by $P^\dagger$. In symmetric matrices, the symbol $(\star)$ represents the transposed element in the corresponding symmetric position. We use $\Vert \cdot \Vert$ to denote the Euclidean norm for a vector $x \in \mathbb{R}^n$ and the induced 2-norm of a matrix $A \in \mathbb{R}^{n \times m}$. The empty set is denoted by $\emptyset$. 
For a system $\Theta$ and a property $\Upsilon$, the notation $\Theta \models \Upsilon$ signifies that $\Theta$ satisfies $\Upsilon$.
\subsection{Discrete-Time Input-Affine Polynomial Systems}
First, we introduce the definition of discrete-time input-affine polynomial systems as follows.

\begin{definition}[\textbf{dt-IAPS}]\label{def: dt-IAPS}
A discrete-time input-affine polynomial system (dt-IAPS) is described as
\begin{equation}\label{sys}
\Theta\!: x^+= A\mathcal{R}(x) + B\mathcal{G}(x)u + w, 
\end{equation}
where $x^+$ is the state variable at the next time step, \emph{i.e.,} $x^+ := x(k + 1),$ with $k \in \N$. Moreover, $x \in X$ with $X \subseteq \mathbb{R}^n $ representing the state set, $ u \in U$ with $U \subseteq \mathbb{R}^m $ denoting the input set, while \( \mathcal{R}(x) \in \mathbb{R}^{N} \) and \(\mathcal{G}(x) \in \mathbb{R}^{\hat{N} \times m}\) are a vector and a matrix of monomials in states $x$, respectively. Additionally, \( A \in \mathbb{R}^{n \times N} \) and \( B \in \mathbb{R}^{n \times \hat{N}} \) are the system and control matrices, while \( w \in \mathbb{R}^n \) represents the system disturbance. We denote by $x_{x_{0uw}}(k)$ the value of the solution process of $\Theta$ at time $k \in \mathbb{N}$ , under input and disturbance trajectories $u(\cdot)$ and $w(\cdot)$, starting from an initial state $x_0\in X$.
\end{definition}

We assume that the matrices $A$ and $B$ are \emph{unknown}, as is often the case in real-world scenarios. 
\color{black}
Additionally, the disturbance $w$ belongs to the following set
\begin{equation}\label{assum}
 W(\delta)=\left\{w\in \mathbb{R}^n \mid w^{\top} w \leq \delta, ~\delta \in \mathbb{R}^{+}_0 \right\}\!,  
\end{equation}  
which is commonly referred to as \emph{unknown-but-bounded} disturbances with an \emph{instantaneous bound} on the norm of $w$~\cite{bisoffi2021trade}. While $\mathcal{R}(x)$ and $\mathcal{G}(x)$ are unknown, we assume that a dictionary for them can be constructed to capture the actual dynamics by being \emph{sufficiently extensive}. Specifically, such dictionaries are structured to be comprehensive, capturing all possible terms in the true system's dynamics, even if they include some terms that are ultimately unnecessary. To achieve this, an upper bound on the maximum degree of $\mathcal{R}(x), \mathcal{G}(x)$ can be obtained based on the physical insights of the system, allowing $\mathcal{R}(x), \mathcal{G}(x)$ to be constructed to include all possible combinations of states up to that bound (cf. both case studies).

We now define the robust safety problem for the dt-IAPS $\Theta$ described in~\eqref{sys}.
\begin{definition}\label{safey}
 Consider a safety specification $\Upsilon = (X_0, X_1)$, where $X_0, X_1 \subseteq X$ are initial and unsafe regions with $X_0 \cap X_1 = \emptyset$. A dt-IAPS is said to be robustly safe against the unknown-but-bounded disturbance $w$, expressed as $\Theta \models \Upsilon$, if all trajectories of $\Theta$ originating from the initial set $X_0$ never enter the unsafe set $X_1$ within an infinite time horizon.   
\end{definition}
Since the main objective of this work is to design a robust safety certificate for the unknown dt-IAPS~\eqref{sys}, we formally define \textit{robust control barrier certificates} in the next subsection.

\subsection{Robust Control Barrier Certificates}
\begin{definition}[\textbf{R-CBC}]\label{def: R-CBC}
	Consider a dt-IAPS
	$\Theta$, with $X_0, X_1 \subseteq X$ being its initial and unsafe sets, respectively. Assuming the existence of constants $\gamma_1,\gamma_2 \in \mathbb{R}^{+}$, $\rho\in \mathbb{R}^{+}_0$, and $\lambda \in (0,1]$ with $\gamma_2 > \gamma_1$, a function $\mathcal B:X \to \mathbb{R}_0^+$ is called a robust control barrier certificate (R-CBC) for $\Theta$ if
	
	\begin{subequations}\label{eq: R-CBC}
		\begin{align}
			&  \:\:  \mathcal B(x) \leq \gamma_1, \hspace{2.55cm} \forall x \in X_{0},\label{subeq: initial}\\
			&  \:\:  \mathcal B(x) \geq \gamma_2, \hspace{2.55cm} \forall x \in X_{1}, \label{subeq: unsafe}
		\end{align}  
		and $\forall x\in \tilde{X} = \big \{ x\in X\!\!: \mathcal{B}(x) < \gamma_2\big \}$, $\exists u\in U$, such that $\forall w \in W(\delta)$ 
		\begin{align}\label{subeq: decreasing}
			\mathcal B(x^+) \leq  \lambda\mathcal B(x) + \rho \Vert w\Vert^2,
		\end{align}
	\end{subequations}
    with $c = \rho \delta$ satisfying \vspace{-0.2cm}
\begin{align}\label{eq: climit}
	c \leq  \gamma_2 (1-\lambda).
\end{align}
	Accordingly, $u$ satisfying \eqref{subeq: decreasing} is a robust safety controller (R-SC) for the dt-IAPS for infinite time horizon.
\end{definition}

\begin{remark}
Drawing inspiration from the $c$-martingale properties in stochastic systems~\cite{1967stochastic}, we introduce a relaxation in \eqref{subeq: decreasing} by allowing the R-CBC to decay relative to a constant $c = \rho \delta$ (note that $\rho \Vert w\Vert^2 \leq  \rho \delta = c$ according to \eqref{assum}). This adjustment significantly improves the feasibility of finding an R-CBC while still ensuring robust safety over an infinite time horizon.
\end{remark}
\color{black}
\begin{remark}
Our definition of R-CBC accommodates multiple unsafe regions. In this context, condition \eqref{subeq: unsafe} should be repeated for all unsafe sets (cf. both case studies with multiple unsafe regions).
\end{remark}

To demonstrate the effectiveness of the R-CBC in ensuring the safety of dt-IAPS as defined in Definition~\ref{safey}, we raise the following theorem, which has been adapted from \cite{10066195}.

\begin{theorem}[\textbf{Robust Safety Guarantee}]\label{thm: model-based}
Given a dt-IAPS, let $\mathcal{B}$ be an R-CBC for $\Theta$ as defined in Definition~\ref{def: R-CBC}, with $c $ satisfying \eqref{eq: climit}. Then, for any $x_0 \in X_0$ and $k \in \mathbb{N}$ under input and disturbance signals $u(\cdot)$ and $w(\cdot)$, one has $x_{x_{0uw}}(k) \notin X_1$.
\end{theorem}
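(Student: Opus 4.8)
The plan is to argue by contradiction using the supermartingale-type inequality \eqref{subeq: decreasing} together with the budget constraint \eqref{eq: climit}. First I would fix an arbitrary $x_0 \in X_0$ together with arbitrary admissible signals $u(\cdot)$ and $w(\cdot)$, and introduce the notation $\mathcal{B}_k := \mathcal{B}(x_{x_{0uw}}(k))$. Suppose, for contradiction, that the trajectory enters $X_1$ at some finite time, and let $k^\star \in \mathbb{N}$ be the \emph{first} such time. Then by \eqref{subeq: unsafe} we have $\mathcal{B}_{k^\star} \geq \gamma_2$, while by \eqref{subeq: initial} we have $\mathcal{B}_0 \leq \gamma_1 < \gamma_2$; in particular $k^\star \geq 1$.

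The core step is to show inductively that $\mathcal{B}_k < \gamma_2$ for all $0 \le k \le k^\star - 1$, i.e.\ that the trajectory stays in the level set $\tilde{X} = \{x \in X : \mathcal{B}(x) < \gamma_2\}$ up to time $k^\star - 1$, which in turn licenses the repeated application of \eqref{subeq: decreasing}. The base case holds since $\mathcal{B}_0 \le \gamma_1 < \gamma_2$. For the inductive step, assuming $x_{x_{0uw}}(j) \in \tilde{X}$ for $j = 0,\dots,k-1$ with $k \le k^\star - 1$, condition \eqref{subeq: decreasing} applies at each such step (for the controller value $u$ realizing it, which is what $u(\cdot)$ is taken to be), giving $\mathcal{B}_{j+1} \le \lambda \mathcal{B}_j + \rho \Vert w(j)\Vert^2 \le \lambda \mathcal{B}_j + c$ using $\rho \Vert w\Vert^2 \le \rho\delta = c$ from \eqref{assum}. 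Unrolling this recursion from $0$ to $k$ yields
\begin{align*}
\mathcal{B}_{k} \;\le\; \lambda^{k}\mathcal{B}_0 + c\sum_{i=0}^{k-1}\lambda^{i} \;\le\; \lambda^{k}\gamma_1 + c\,\frac{1-\lambda^{k}}{1-\lambda}.
\end{align*}
Since $c \le \gamma_2(1-\lambda)$ by \eqref{eq: climit}, the second term is bounded by $\gamma_2(1-\lambda^k)$, so $\mathcal{B}_k \le \lambda^k \gamma_1 + \gamma_2(1-\lambda^k) = \gamma_2 - \lambda^k(\gamma_2 - \gamma_1) < \gamma_2$, using $\gamma_2 > \gamma_1$ and $\lambda \in (0,1]$ (note $\lambda^k > 0$). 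This closes the induction and in particular shows $x_{x_{0uw}}(k^\star - 1) \in \tilde{X}$.

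Finally, applying \eqref{subeq: decreasing} one more time at step $k^\star - 1$ gives $\mathcal{B}_{k^\star} \le \lambda \mathcal{B}_{k^\star - 1} + c$, and feeding in the bound just derived for $\mathcal{B}_{k^\star-1}$ produces $\mathcal{B}_{k^\star} \le \gamma_2 - \lambda^{k^\star}(\gamma_2-\gamma_1) < \gamma_2$, contradicting $\mathcal{B}_{k^\star} \ge \gamma_2$. Hence no finite entry time exists and $x_{x_{0uw}}(k) \notin X_1$ for all $k \in \mathbb{N}$. The only delicate point is the bookkeeping that ensures \eqref{subeq: decreasing} is legitimately invoked at every step — it is available precisely because the inductive hypothesis keeps the state in $\tilde X$, where a suitable $u$ is guaranteed to exist; the inequality $c \le \gamma_2(1-\lambda)$ is exactly the condition that prevents the accumulated disturbance budget from ever pushing $\mathcal{B}$ up to $\gamma_2$.
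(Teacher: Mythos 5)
Your proof is correct and rests on the same idea as the paper's: the sublevel set $\{x : \mathcal{B}(x) < \gamma_2\}$ is forward invariant because $\lambda\mathcal{B}(x(k)) + c \le \lambda\gamma_2 + \gamma_2(1-\lambda) = \gamma_2$, which the paper presents as a direct one-step induction while you wrap the same inequality in a first-entry-time contradiction and an unrolled geometric bound. The only blemish is the intermediate expression $c\,\tfrac{1-\lambda^{k}}{1-\lambda}$, which is ill-defined at $\lambda=1$ (where \eqref{eq: climit} forces $c=0$); your preceding sum form $c\sum_{i=0}^{k-1}\lambda^{i}$ together with the telescoping identity $(1-\lambda)\sum_{i=0}^{k-1}\lambda^{i}=1-\lambda^{k}$ covers that case, so nothing substantive is lost.
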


\begin{proof}
According to \eqref{subeq: initial}, the initial state satisfies $\mathcal{B}(x(0)) \leq \gamma_1 < \gamma_2$. We now demonstrate that if $\mathcal{B}(x(k)) < \gamma_2$, then $\mathcal{B}(x(k+1)) < \gamma_2$. Given that $c \leq \gamma_2(1 - \lambda)$ as per \eqref{eq: climit}, and in line with \eqref{subeq: decreasing}, we have $\mathcal{B}(x(k+1)) < \lambda \gamma_2 + \gamma_2(1 - \lambda) = \gamma_2$. Therefore, according to \eqref{subeq: unsafe}, it follows that $x(k+1) \notin X_1$. Consequently, all system trajectories will remain outside $X_1$ over an infinite time horizon, thus completing the proof.
\end{proof} 

While the R-CBC defined in Definition~\ref{def: R-CBC} can ensure robust safety guarantees as stated in Theorem~\ref{thm: model-based}, designing an R-CBC is infeasible due to the unknown models that appear on the left-hand side of \eqref{subeq: decreasing} (\emph{i.e.,} $x^+ = A\mathcal{R}(x) + B\mathcal{G}(x)u + w$). With this critical challenge in mind, we now formally present the problem we aim to address in this work.

\color{black}
\begin{resp}
\begin{problem}
Consider a dt-IAPS $\Theta$, as defined in Definition~\ref{def: dt-IAPS}, with an unknown matrices $A, B$, unknown  $\mathcal{R}(x), \mathcal{G}(x)$ with an upper bound on their maximum degree, unknown-but-bounded disturbance $w$, and a safety specification $\Upsilon = (X_0, X_1)$. Collect input-state data from dt-IAPS and develop a direct data-driven framework for designing an R-CBC and its corresponding R-SC, as outlined in Definition~\ref{def: R-CBC}, thereby guaranteeing $\Theta \models \Upsilon$, as demonstrated in Theorem~\ref{thm: model-based}.
\end{problem}
\end{resp}

\section{Data-Driven Construction of R-CBCs and R-SCs}\label{sec: data scheme}
In this section, we propose our data-driven method for constructing an R-CBC along with its corresponding R-SC for the unknown dt-IAPS, as in Definition~\ref{def: dt-IAPS}. To achieve this, we consider the R-CBC structure in a quadratic form, $\mathcal{B}(x) = x^\top P x$, where $P \succ 0$. We gather input-state data (\emph{i.e.,} measurements) over a time horizon of $\left[0, T\right]$, where $T \in \mathbb{N}^+$ denotes the total number of collected samples:

\begin{subequations}\label{eq: data}
\begin{align}
\mathbb{U}_{0,T} &:= [u(0) \;\; u(1) \;\; \dots \;\; u(T-1)], \label{eq:U_0_T} \\
\mathbb{X}_{0,T} &:= [x(0) \;\; x(1) \;\; \dots \;\; x(T-1)], \label{eq:X_0_T} \\
\mathbb{W}_{0,T} &:= [w(0) \;\; \!\!w(1)\;\; \dots \;\; w(T-1)], \label{eq:E_0_T} \\
\mathbb{X}_{1,T} &:= [x(1) \;\;x(2) \;\; \dots \;\; x(T)]. \label{eq:X_1_T}
\end{align}
\end{subequations}
The trajectory $\mathbb{X}_{1,T}$ is precisely collected in discrete time as the system's unknown dynamics evolve recursively at each interval. Additionally, $\mathbb{W}_{0,T}$ is fully unknown. Given the collected trajectories, we now compute the following auxiliary trajectories that are required for our analysis:
\begin{subequations}\label{eq:V_0_T11}
\begin{align}\notag
\mathbb{G}_{0,T} &:= [\mathcal G(x(0))u(0) \;\; \mathcal G(x(1))u(1) \;\; \dots\;\;\\\label{eq:V_0_T}&~~~~~~~~~~~~~~~~~~~~~~~~~~~~~~~~~~  \mathcal G(x(T-1))u(T-1)],\\\label{eq:N_0_T}
\mathbb{R}_{0,T} &:= [\mathcal R(x(0)) \;\;  \mathcal R(x(1)) \;\;  \dots \;\;  \mathcal R(x(T-1))].
\end{align}
\end{subequations}
Drawing on the insights from~\cite{de2019formulas,guo2021data}, we establish the required conditions that form the basis for the data-driven representation of dt-IAPS, as presented in the next lemma.

\begin{lemma}\label{lemma1}
Let \( \mathbb{Q}(x) \in \mathbb{R}^{T \times n} \) be a matrix polynomial satisfying
\begin{equation}\label{con1}
\mathbb{L}(x) = \mathbb{R}_{0,T}  \mathbb{Q}(x), 
\end{equation}
where $\mathbb{R}_{0,T} $ is a full-row rank matrix as in \eqref{eq:N_0_T} and $\mathbb{L}(x)\in \mathbb{R}^{N \times n}$ is a transformation matrix fulfilling
\begin{equation}\label{transform}
\mathcal{R}(x)=\mathbb{L}(x)x.  
\end{equation}
By designing the state-feedback controller  
\begin{align}\label{cont}
	u = \mathcal{K}(x)x, \quad \text{with} \quad \mathcal{K}(x) = \mathbb{U}_{0,T} \mathbb{Q}(x),
\end{align}
the system~\eqref{sys} can be equivalently represented using collected data as
\begin{equation}\label{lem}
x^+= \CLbr x + w,
\end{equation}
where $\tilde{A} = [A\quad B], \quad \tilde{\mathbb{R}}(x) = \left[\begin{array}{c} \mathbb{R}_{0,T} \\
\mathcal{G}(x)\mathbb{U}_{0,T}\end{array}\right]\!\!.$
\end{lemma}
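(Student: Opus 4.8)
The plan is to start from the true system description \eqref{sys}, namely $x^+ = A\mathcal R(x) + B\mathcal G(x)u + w$, substitute the state-feedback law \eqref{cont}, and then rewrite the resulting closed-loop dynamics entirely in terms of the collected data matrices $\mathbb{U}_{0,T}$, $\mathbb{R}_{0,T}$ together with $\mathbb{Q}(x)$. First I would plug $u = \mathcal K(x)x = \mathbb{U}_{0,T}\mathbb{Q}(x)\,x$ into \eqref{sys} to obtain $x^+ = A\mathcal R(x) + B\mathcal G(x)\mathbb{U}_{0,T}\mathbb{Q}(x)\,x + w$. The key observation is that, by the transformation identity \eqref{transform}, $\mathcal R(x) = \mathbb{L}(x)x$, and by the defining property \eqref{con1} of $\mathbb{Q}(x)$ we have $\mathbb{L}(x) = \mathbb{R}_{0,T}\mathbb{Q}(x)$, so that $\mathcal R(x) = \mathbb{R}_{0,T}\mathbb{Q}(x)\,x$. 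Substituting this into the $A\mathcal R(x)$ term yields $x^+ = A\,\mathbb{R}_{0,T}\mathbb{Q}(x)\,x + B\,\mathcal G(x)\mathbb{U}_{0,T}\mathbb{Q}(x)\,x + w$.

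Next I would factor the common right factor $\mathbb{Q}(x)\,x$ out of both model-dependent terms. Writing the two stacked blocks, $A\,\mathbb{R}_{0,T}\mathbb{Q}(x)\,x + B\,\mathcal G(x)\mathbb{U}_{0,T}\mathbb{Q}(x)\,x = \begin{bmatrix} A & B\end{bmatrix}\!\begin{bmatrix}\mathbb{R}_{0,T}\\ \mathcal G(x)\mathbb{U}_{0,T}\end{bmatrix}\!\mathbb{Q}(x)\,x$, which is exactly $\tilde A\,\tilde{\mathbb{R}}(x)\,\mathbb{Q}(x)\,x$ with $\tilde A$ and $\tilde{\mathbb{R}}(x)$ as defined in the statement. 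Hence $x^+ = \tilde A\,\tilde{\mathbb{R}}(x)\,\mathbb{Q}(x)\,x + w = \CLbr x + w$, which is \eqref{lem}. The role of the full-row-rank hypothesis on $\mathbb{R}_{0,T}$ is to guarantee that a polynomial matrix $\mathbb{Q}(x)$ solving \eqref{con1} actually exists (one may take, e.g., a right inverse of $\mathbb{R}_{0,T}$ composed with $\mathbb{L}(x)$); I would state this explicitly so that the construction of $\mathcal K(x)$ in \eqref{cont} is well defined.

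The argument is essentially a chain of substitutions, so there is no deep obstacle; the main point requiring care is the bookkeeping around the transformation matrix. Specifically, I must make sure that a valid $\mathbb{L}(x)$ satisfying $\mathcal R(x) = \mathbb{L}(x)x$ exists — this holds because every monomial in $\mathcal R(x)$ that is not a nonzero constant is divisible by some state component, and any constant entries can be handled by padding the dictionary appropriately (or by the standing assumption that $\mathcal R$ is built from monomials in $x$ without a constant term); I would note this as the one modeling assumption that makes \eqref{transform} legitimate. Everything else — the equivalence claim — then follows immediately from \eqref{con1}, \eqref{transform}, and the definition \eqref{cont} of the controller, with no further computation needed.
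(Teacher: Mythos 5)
Your proof is correct and follows essentially the same chain of substitutions as the paper's own argument: substitute the feedback law, replace $\mathcal R(x)=\mathbb{L}(x)x=\mathbb{R}_{0,T}\mathbb{Q}(x)x$ via \eqref{transform} and \eqref{con1}, and factor out $\mathbb{Q}(x)x$ to obtain $\tilde A\,\tilde{\mathbb{R}}(x)\mathbb{Q}(x)x+w$. Your added remarks on the existence of $\mathbb{Q}(x)$ (from the full-row-rank hypothesis) and of $\mathbb{L}(x)$ (requiring $\mathcal R(0)=0$) are sound and correspond to points the paper relegates to its surrounding remarks.
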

\begin{proof}
Using the collected data in \eqref{eq: data}, it is clear that  
\begin{align}\label{new34}
	 \mathbb{X}_{1,T} =  A\mathbb{R}_{0,T} + B\mathbb{G}_{0,T} + \mathbb{W}_{0,T}.
\end{align}
By designing the state feedback \( u= \mathcal{K}(x)x\), one has
\begin{align}\notag
A \mathcal{R}(x) \!+\! B\mathcal{G}(x)u &\overset{\eqref{transform},\eqref{cont}}{=} \!\!(A\mathbb{L}(x)  \!+\! B\mathcal{G}(x)\mathcal{K}(x))x \\\notag&\overset{\eqref{con1},\eqref{cont}}{=} 
\!\!(A \mathbb{R}_{0,T} \mathbb{Q}(x) \!+\! B\mathcal{G}(x) \mathbb{U}_{0,T} \mathbb{Q}(x))x\\\notag
&~~\!= (A \mathbb{R}_{0,T} \!+\! B\mathcal{G}(x) \mathbb{U}_{0,T}) \mathbb{Q}(x)x \\\label{new33}&~~\!= \underbrace{[A\quad B]}_{\tilde{A}}\underbrace{\left[\begin{array}{c} \mathbb{R}_{0,T} \\
\mathcal{G}(x)\mathbb{U}_{0,T}\end{array}\right]}_{\tilde{\mathbb{R}}(x)} \mathbb{Q}(x)x.
\end{align}
Then, we have 
\begin{align}\notag
	x^+ = A \mathcal{R}(x) \!+\! B\mathcal{G}(x)u + w &\overset{\eqref{new33}}{=}  \CLbr x + w,
\end{align}
which concludes the proof.  
\end{proof}
\begin{remark} 
Condition~\eqref{transform} ensures that all expressions are ultimately expressed in terms of $x$ rather than $\mathcal{R}(x)$. This consistency is essential since our R-CBC is defined as $\mathcal{B}(x) = x^\top P x$, which depends solely on $x$. Expressing everything in terms of $x$ simplifies the formulation and facilitates the proof of our main results in Theorem~\ref{thm: main}. Without loss of generality, a dictionary of monomials $\mathcal{R}(x)$ (with $\mathcal{R}(0) = 0$), can always be reformulated based on $\mathbb{L}(x)$ in \eqref{transform}.
\end{remark}
 
\begin{remark}\label{full-row-rank}
A required condition for the existence of $\mathbb{Q}(x)$ in \eqref{con1} is that $\mathbb{R}_{0,T}$ must have full row rank. To achieve this, the number of samples $T$ must exceed $N$ (i.e., $T > N$). This condition is simple to verify as $\mathbb{R}_{0,T}$ is constructed directly from sampled data.
\end{remark}
Building on the data-based results of Lemma~\ref{lemma1}, we propose the following theorem, as the central contribution of this work, enabling the design of an R-CBC and its R-SC using the collected trajectories in~\eqref{eq: data}.

\begin{theorem}[\textbf{Data-driven R-CBC and R-SC}]\label{thm: main}
Consider an unknown dt-IAPS $\Theta$ with its data-based representation, as in Lemma~\ref{lemma1}. Suppose there exists a state-dependent matrix $\mathcal{H}(x) \in \mathbb{R}^{T \times n}$ such that
\begin{equation}\label{H&P}
\mathbb{R}_{0,T} \mathcal{H}(x)= \mathbb{L}(x) P^{-1}.  
\end{equation}
If there exist constants $\gamma_1,\gamma_2\in \mathbb{R}^{+}$, and $\lambda \in (0,1]$ with $\gamma_2 > \gamma_1$ and $(1-\lambda)\gamma_2 \geq c$, where $c = (1 + \frac{1}{\pi})\Vert\sqrt{P}\Vert^2\delta$, such that the following conditions are satisfied:
\begin{subequations}\label{eq: key-conditions}
    \begin{align}
        &x^\top \left[\mathbb{L}(x)^\dagger \mathbb{R}_{0,T} \mathcal{H}(x)\right]^{-1} x \leq \gamma_1, \quad \forall x \in X_0, \label{subeq: thm-ini}\\
        &x^\top \left[\mathbb{L}(x)^\dagger \mathbb{R}_{0,T} \mathcal{H}(x)\right]^{-1} x \geq \gamma_2, \quad \forall x \in X_1, \label{subeq: thm-uns}\\
        &\begin{bmatrix}
            -\lambda P^{-1} & 0 & 0\\
            \star & 0 & \tilde{\mathbb{R}}(x)\mathcal{H}(x)\\
            \star & \star & -(1+\pi)^{-1} P^{-1}
        \end{bmatrix} \notag \\
        &- \alpha(x) \begin{bmatrix}
           \mathbb{X}_{1,T}\mathbb{X}_{1,T}^\top - T\delta\mathds{I}_n & -\mathbb{X}_{1,T}\hat{\mathbb{R}}^\top & 0\\
            \star & \hat{\mathbb{R}}\hat{\mathbb{R}}^\top & 0\\
            \star & \star & 0
        \end{bmatrix} \!\leq\! 0, ~ \forall x \!\in\! \tilde{X}, \label{subeq: thm-dec} \\
        & \text{where }\quad \tilde{\mathbb{R}}(x) = \left[\begin{array}{c} \mathbb{R}_{0,T} \\
\mathcal{G}(x)\mathbb{U}_{0,T}\end{array}\right]\!\!,\quad \hat{\mathbb{R}} = \left[\begin{array}{c} \mathbb{R}_{0,T} \\
\mathbb{G}_{0,T}\end{array}\right]\!\!, \notag
    \end{align}
\end{subequations}
for some $\pi, \alpha(x) \in\mathbb{R}^{+}$, then it follows that $\mathcal{B}(x) = x^\top \overbrace{\left[\mathbb{L}(x)^\dagger \mathbb{R}_{0,T} \mathcal{H}(x)\right]^{-1}}^{P} x$ serves as an R-CBC  and $
u = \mathbb{U}_{0,T} \mathcal{H}(x) \underbrace{\left[\mathbb{L}(x)^\dagger \mathbb{R}_{0,T} \mathcal{H}(x)\right]^{-1}}_{P} x$ is its associated R-SC for the unknown $\Theta$ with $\rho = (1 + \frac{1}{\pi})\Vert\sqrt{P}\Vert^2$, equivalently $c = (1 + \frac{1}{\pi})\Vert\sqrt{P}\Vert^2\delta$.
\end{theorem}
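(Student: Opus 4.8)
The plan is to show that the matrix $P$ implicitly defined by $P^{-1} = \mathbb{L}(x)^\dagger \mathbb{R}_{0,T}\mathcal{H}(x)$ together with the controller $u = \mathbb{U}_{0,T}\mathcal{H}(x)Px$ satisfies all three requirements of Definition~\ref{def: R-CBC}. Conditions \eqref{subeq: initial} and \eqref{subeq: unsafe} are essentially immediate: since $\mathcal{B}(x) = x^\top P x$ with $P = [\mathbb{L}(x)^\dagger \mathbb{R}_{0,T}\mathcal{H}(x)]^{-1}$, inequalities \eqref{subeq: thm-ini} and \eqref{subeq: thm-uns} are literally \eqref{subeq: initial} and \eqref{subeq: unsafe}; the only thing to check is that $P$ is well-defined, symmetric, and positive definite on the relevant sets, which follows from \eqref{H&P} and the full-row-rank assumption on $\mathbb{R}_{0,T}$ (so that $\mathbb{L}(x)^\dagger\mathbb{R}_{0,T}\mathcal{H}(x) = P^{-1}$ is consistent with \eqref{H&P}, using $\mathbb{L}(x)^\dagger\mathbb{L}(x) = \mathds{I}_N$ from \eqref{transform}). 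So the real content is the decrease condition \eqref{subeq: decreasing}.

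For the decrease condition I would start from the data-based closed-loop representation of Lemma~\ref{lemma1}, $x^+ = \tilde{A}\tilde{\mathbb{R}}(x)\mathbb{Q}(x)x + w$, but with the specific choice $\mathbb{Q}(x)x = \mathcal{H}(x)Px$ — i.e.\ identify $\mathcal{H}(x)P$ with $\mathbb{Q}(x)$ restricted to its action on $x$, which is exactly consistent with \eqref{H&P} since $\mathbb{R}_{0,T}\mathcal{H}(x)P = \mathbb{L}(x)$ recovers \eqref{con1}. Then $\mathcal{B}(x^+) = (x^+)^\top P x^+$ with $x^+ = \tilde{A}\tilde{\mathbb{R}}(x)\mathcal{H}(x)Px + w$. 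The target inequality $\mathcal{B}(x^+) \le \lambda\mathcal{B}(x) + \rho\|w\|^2$ is a quadratic-form inequality in the stacked vector $(x, w)$ (or rather in $(Px, w)$), and via a Schur complement it is equivalent to a linear matrix inequality; handling the cross term $2(\tilde{A}\tilde{\mathbb{R}}(x)\mathcal{H}(x)Px)^\top P w$ is done by Young's inequality with parameter $\pi$, which is where the factor $(1+\pi)^{-1}$ in the $(3,3)$-block and the $\rho = (1+\tfrac1\pi)\|\sqrt{P}\|^2$ come from. This produces an LMI whose only remaining unknown is the product $\tilde{A} = [A\ B]$, which is not known.

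The main obstacle — and the technical heart of the proof — is eliminating the unknown $\tilde{A}$ from the LMI. Here I would use the disturbance bound: from \eqref{new34}, $\mathbb{X}_{1,T} = A\mathbb{R}_{0,T} + B\mathbb{G}_{0,T} + \mathbb{W}_{0,T} = \tilde{A}\hat{\mathbb{R}} + \mathbb{W}_{0,T}$, so $\tilde{A}\hat{\mathbb{R}} = \mathbb{X}_{1,T} - \mathbb{W}_{0,T}$, and the instantaneous bound $w(k)^\top w(k)\le\delta$ gives $\mathbb{W}_{0,T}\mathbb{W}_{0,T}^\top \preceq T\delta\mathds{I}_n$. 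This is precisely the set-membership characterization used in \cite{bisoffi2021trade,guo2021data}: the true $\tilde{A}$ lies in the set $\{\tilde{A} : (\mathbb{X}_{1,T} - \tilde{A}\hat{\mathbb{R}})(\mathbb{X}_{1,T} - \tilde{A}\hat{\mathbb{R}})^\top \preceq T\delta\mathds{I}_n\}$, which rearranges into a quadratic matrix inequality in $\tilde{A}$ — exactly the second (bracketed, $\alpha(x)$-scaled) matrix in \eqref{subeq: thm-dec}. The desired LMI must hold for \emph{every} $\tilde{A}$ in this set, and by a matrix S-procedure (Finsler-type lemma, e.g.\ the Petersen lemma / matrix elimination of \cite{bisoffi2021trade}), the existence of a scalar multiplier $\alpha(x) > 0$ making \eqref{subeq: thm-dec} hold is sufficient (and, under a strict-feasibility/constraint-qualification caveat, essentially necessary) for the robust decrease inequality to hold for the true system. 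I would then assemble: \eqref{subeq: thm-dec} $\Rightarrow$ robust version of the Schur-complemented inequality $\Rightarrow$ $\mathcal{B}(x^+)\le\lambda\mathcal{B}(x)+\rho\|w\|^2$ for all admissible $w$, which is \eqref{subeq: decreasing}; combined with $c = \rho\delta = (1+\tfrac1\pi)\|\sqrt P\|^2\delta \le (1-\lambda)\gamma_2$, Definition~\ref{def: R-CBC} is met and Theorem~\ref{thm: model-based} yields $\Theta\models\Upsilon$. The bookkeeping around where $P$ versus $P^{-1}$ appears (the LMI is naturally in $P^{-1}$ after congruence transformation by $\mathrm{diag}(P^{-1}, \mathds{I}, \mathds{I})$ or similar) and making the $\|\sqrt P\|^2$ bound on $\rho\|w\|^2\le\rho\delta$ rigorous are the routine-but-delicate parts; the S-procedure step is the one genuinely nontrivial lemma invocation.
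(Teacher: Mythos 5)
Your proposal is correct and follows essentially the same route as the paper's proof: Young's inequality with parameter $\pi$ to absorb the cross term (yielding $\rho=(1+\tfrac1\pi)\Vert\sqrt P\Vert^2$ and the $(1+\pi)^{-1}$ block), the set-membership characterization $(\mathbb{X}_{1,T}-\tilde A\hat{\mathbb{R}})(\mathbb{X}_{1,T}-\tilde A\hat{\mathbb{R}})^\top\preceq T\delta\mathds{I}_n$ of the unknown $\tilde A$, elimination via the S-procedure with multiplier $\alpha(x)$, and a Schur-complement dilation to remove the bilinear term $\mathcal{H}(x)P\mathcal{H}(x)^\top$. The only cosmetic difference is that the paper passes from the condition in $P$ to its dual in $P^{-1}$ via an explicit Schur-complement equivalence ($V\le 0\Leftrightarrow\mathcal J\le 0$) rather than your congruence-transformation phrasing, which amounts to the same step.
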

\begin{proof}
We start by showing the fulfillment of conditions~\eqref{subeq: initial}-\eqref{subeq: unsafe}. Based on condition~\eqref{H&P}, it can be directly concluded that satisfying conditions~\eqref{subeq: thm-ini}-\eqref{subeq: thm-uns} ensures compliance with conditions~\eqref{subeq: initial}-\eqref{subeq: unsafe}, where $P = \left[\mathbb{L}(x)^\dagger \mathbb{R}_{0,T} \mathcal{H}(x)\right]^{-1}$.
Next, we demonstrate that condition~\eqref{subeq: decreasing} is also satisfied. 
Utilizing the quadratic form of the R-CBC, we have
\begin{align}\notag
 \mathcal{B}(x^+)\!&=\! \big(A\mathcal{R}(x) \!+\! B\mathcal{G}(x)u \!+\! w\big)^\top \!\!P\big(A\mathcal{R}(x) \!+\! B\mathcal{G}(x)u \!+\! w\big)\\\notag
&\overset{\eqref{lem}}{=}\big(\CLbr x + w\big)^\top P \big(\CLbr x + w\big) \\\notag
 &\,\,=x^\top\big( \CLbr \big)^{\top} P  \big(\CLbr \big)x \\\notag
 &~~~ + 2\underbrace{x^\top\big(\CLbr\big)^{\top}\!\! \sqrt{P}}_a\underbrace{\sqrt{P}w}_b + w^\top P w .
 \end{align}
According to the Cauchy-Schwarz inequality \cite{bhatia1995cauchy}, \emph{i.e.,}  $a b \leq \Vert a \Vert \Vert b \Vert,$ for any $a^\top, b \in \R^{n}$, followed by
employing Young's inequality \cite{young1912classes}, \emph{i.e.,} $\Vert a \Vert \Vert b \Vert \leq \frac{\pi}{2} \Vert a \Vert^2 + \frac{1}{2\pi} \Vert b \Vert^2$, for any $\pi \in \mathbb R^+$, one has
\begin{align}\notag
	\mathcal{B}(x^+)&\leq x^\top\big(\CLbr \big)^{\top} P  \big(\CLbr\big)x \\\notag
	&~~~ + \pi x^\top\big(\CLbr \big)^{\top} P  \big(\CLbr\big)x \\\notag
	& ~~~ + \frac{1}{\pi}\Vert\sqrt{P}\Vert^2\Vert w \Vert^2 +  \Vert\sqrt{P}\Vert^2\Vert w \Vert^2\\\notag
	&=(1 \!+\! \pi) x^\top\big(\CLbr \big)^{\top} P  \big(\CLbr\big)x \\\label{new87}& ~~~ + \underbrace{(1 + \frac{1}{\pi})\Vert\sqrt{P}\Vert^2}_\rho\Vert w \Vert^2. 
\end{align}
Therefore, by subtracting $\lambda\mathcal{B}(x)$ from both sides of \eqref{new87}, and according to \eqref{assum}, one has
  \begin{align*}
 &\mathcal{B}(x^+) - \lambda\mathcal{B}(x) \notag \\ 
 &~\leq (1 + \pi)x^\top\big(\CLbr \big)^{\top} P \big(\CLbr\big)x \\&~~~  -\lambda \underbrace{x^\top P x}_{\mathcal{B}(x)} +\underbrace{ \rho\delta}_c.
\end{align*}
By defining 
 \begin{align}\label{New99}
V(\tilde{A}; x) := (1 + \pi) (\CLbr)^\top P (\CLbr) - \lambda P,
\end{align}
it is clear that 
\begin{equation*}
    \text{if} \,\,\, V(\tilde{A}; x)\leq0 \Rightarrow  \mathcal{B}(x^+) - \lambda\mathcal{B}(x)\leq c.
\end{equation*}
Therefore, it is needed to enforce $V(\tilde{A}; x)\leq0$. By defining
\begin{align}\label{New991}
\mathcal{J}(\tilde{A}; x) \! := \! (1 \! +\! \pi) (\CLbr) P^{-1} (\CLbr)^\top \!\!-\!\! \lambda P^{-1}\!,
\end{align}
it can be shown using Schur complement \cite{caverly2019lmi} that
\begin{equation*}
    V(\tilde{A};x)\leq0 ~\text{in}~ \eqref{New99}\Leftrightarrow \mathcal{J}(\tilde{A};x)\leq 0 ~\text{in}~ \eqref{New991}.
\end{equation*}
\noindent Since, according to~\eqref{H&P},  one has $\mathbb{R}_{0,T} \mathcal{H}(x) P=\mathbb{L}(x)$, we utilize~\eqref{con1} and set $\mathbb{Q}(x)=\mathcal{H}(x) P$, implying that $\mathbb{Q}(x) P^{-1}=\mathcal{H}(x)$.
We now rewrite \eqref{New991} based on $\mathcal H(x)$ as
\begin{align*}
\mathcal{J}(\tilde{A}; x) \! = \!  \tilde{A}( \tilde{\mathbb{R}}(x)\mathcal{H}(x)((1 \! +\!  \pi)P)\mathcal{H}(x)^\top\tilde{\mathbb{R}}(x)^\top )\tilde{A}^\top \!\!-\!\! \lambda P^{-1}\!.
\end{align*}
Then from now onwards, we aim to enforce $\mathcal{J}(\tilde{A};x)\leq 0$. To do so, we rewrite \eqref{new34} as
\begin{align}\label{new35}
	 \mathbb{X}_{1,T} =  \tilde{A}\hat{\mathbb{R}} + \mathbb{W}_{0,T}, \quad \text{with}~\hat{\mathbb{R}} = \left[\begin{array}{c} \mathbb{R}_{0,T} \\
\mathbb{G}_{0,T}\end{array}\right]\!\!.
\end{align}
On the other hand, by applying Schur complement \cite{caverly2019lmi} to  \eqref{assum}, one can verify that
\begin{align*}
	w w^\top \leq \delta \mathds{I}_n, \quad \forall w \in W(\delta).
\end{align*}
Since
\begin{align*}
    &\mathbb{W}_{0,T} \mathbb{W}_{0,T}^\top = \sum_{k=0}^{T-1} w(k) w(k),
\end{align*}
we have
\begin{align}\label{eq: assum_t}
    T\delta \mathds{I}_n \geq &~\mathbb{W}_{0,T} \mathbb{W}_{0,T}^\top \notag \\
    \overset{\eqref{new35}}{=}  & (\mathbb{X}_{1,T}  - \tilde{A}\hat{\mathbb{R}} )(\mathbb{X}_{1,T}  - \tilde{A}\hat{\mathbb{R}} )^\top \notag \\
        =  & \mathbb{X}_{1,T} \mathbb{X}_{1,T} ^\top  - \tilde{A}\hat{\mathbb{R}}\mathbb{X}_{1,T} ^\top - \mathbb{X}_{1,T} \hat{\mathbb{R}}^\top \tilde{A}^\top + \tilde{A}\hat{\mathbb{R}}\hat{\mathbb{R}}^\top \tilde{A}^\top.  \notag \\
\end{align}
By defining
\begin{align*}
\hat{\mathcal{J}}(\tilde{A}) :=& \tilde{A}\hat{\mathbb{R}}\hat{\mathbb{R}}^\top \tilde{A}^\top -  \tilde{A}\hat{\mathbb{R}}\mathbb{X}_{1,T} ^\top - \mathbb{X}_{1,T} \hat{\mathbb{R}}^\top \tilde{A}^\top \\
&+ \mathbb{X}_{1,T} \mathbb{X}_{1,T} ^\top  - T\delta \mathds{I}_n,
\end{align*}
inequality \eqref{eq: assum_t} can be rewritten as $\hat{\mathcal{J}}(\tilde{A})\leq 0$. 
Therefore, according to S-procedure \cite{caverly2019lmi}, in order to satisfy $\mathcal{J}(\tilde{A}; x) \leq 0 $ conditioned on the data conformity in \eqref{eq: assum_t}, it is sufficient that $\exists \,\, \alpha(x) \in \mathbb{R^+} $ such that
\begin{equation*} 
   \mathcal{J}(\tilde{A}; x) - \alpha(x) \hat{\mathcal J}(\tilde{A})  \leq 0,
\end{equation*}
which could be rewritten in matrix form as
    \begin{align} \label{eq: SP-result}
&\left[\begin{array}{c} \mathds{I}_n\\\tilde{A}^\top \end{array}\right]^\top\Big(\!\begin{bmatrix}
            -\lambda P^{-1} & 0 \\
            \star &  (1+\pi)\tilde{\mathbb{R}}(x)\mathcal{H}(x)P\mathcal{H}(x)^\top\tilde{\mathbb{R}}(x)^\top\\
        \end{bmatrix} \notag \\
        &- \alpha(x) \begin{bmatrix}
           \mathbb{X}_{1,T}\mathbb{X}_{1,T}^\top - T\delta\mathds{I}_n & -\mathbb{X}_{1,T}\hat{\mathbb{R}}^\top\\
            \star & \hat{\mathbb{R}}\hat{\mathbb{R}}^\top
        \end{bmatrix} \!\Big)\left[\begin{array}{c} \mathds{I}_n \\\tilde{A}^\top \end{array}\right] \leq 0. 
    \end{align}
The only remaining challenge in satisfying \eqref{eq: SP-result} lies in the bilinear term $\mathcal{H}(x) P \mathcal{H}(x)^\top$. To resolve this, one can use dilation by applying Schur complement \cite{caverly2019lmi} and demonstrate that inequality (\ref{eq: SP-result}) is equivalent to \eqref{subeq: thm-dec}, Therefore, \eqref{subeq: thm-dec} is sufficient to satisfy \eqref{eq: SP-result}, which concludes the proof.
\end{proof}

\begin{remark}
	One can consider $P^{-1} = Z$ in ~\eqref{subeq: thm-dec} and solve for $Z$ such that it is symmetric and positive definite. Once $Z$ is determined, its inverse will provide the matrix $P$. Additionally, since $\lambda$ and $\pi$ in \eqref{subeq: thm-dec} are both scalar, they are fixed a-priori to avoid bilinearity when solving this condition for the design of $P$ and $\mathcal H$. 
\end{remark}

After establishing the main theorem of this work, we now propose the following lemma that transforms conditions~\eqref{subeq: thm-ini}-\eqref{subeq: thm-dec} into a sum-of-squares (SOS) optimization problem, enabling the systematic computation of R-CBCs and their associated R-SCs.

\begin{lemma}[\textbf{SOS Computation of R-CBC and R-SC}]\label{SOS}
Let $X$, $X_0$, and $X_1$ be defined by polynomial inequality vectors as $X = \{x \in \mathbb{R}^n \mid \beta(x) \geq 0\}$, $X_0 = \{x \in \mathbb{R}^n \mid \beta_0(x) \geq 0\}$, and $X_1 = \{x \in \mathbb{R}^n \mid \beta_1(x) \geq 0\}$. Then $\mathcal{B}(x) = x^\top \left[\mathbb{L}(x)^\dagger \mathbb{R}_{0,T} \mathcal{H}(x)\right]^{-1} x$ is an R-CBC for the unknown dt-IAPS~\eqref{sys} with $\rho = (1 + \frac{1}{\pi})\Vert\sqrt{P}\Vert^2$, equivalently $c = (1 + \frac{1}{\pi})\Vert\sqrt{P}\Vert^2\delta$, and $
 u= \mathbb{U}_{0,T} \mathcal{H}(x) \left[\mathbb{L}(x)^\dagger \mathbb{R}_{0,T} \mathcal{H}(x)\right]^{-1} x   
$ is its R-SC if there exist a state-dependent polynomial matrix $\mathcal{H}(x) \in \mathbb{R}^{{T} \times n}$, constants $ \gamma_1, \gamma_2 \in \mathbb{R}^+$, $\lambda \in (0,1]$ with $c \leq  \gamma_2 (1-\lambda)$, and SOS polynomial vectors $\varpi(x), \varpi_0(x),\varpi_1(x)$ such that the following conditions are SOS polynomials:
\begin{subequations}\label{eq: SOS conditions}
        \begin{align}
            &-x^\top \left[\mathbb{L}(x)^\dagger \mathbb{R}_{0,T} \mathcal{H}(x)\right]^{-1} x - \varpi_0^\top(x) \beta_0(x) + \gamma_1,  \label{subeq: lem-ini}\\
            & ~~~~x^\top \left[\mathbb{L}(x)^\dagger \mathbb{R}_{0,T} \mathcal{H}(x)\right]^{-1} x - \varpi_1^\top(x) \beta_1(x) - \gamma_2, \label{subeq: lem-uns}\\\notag
            &-\begin{bmatrix}
            	-\lambda P^{-1} & 0 & 0\\
            	\star & 0 & \tilde{\mathbb{R}}(x)\mathcal{H}(x)\\
            	\star & \star & -(1+\pi)^{-1} P^{-1}
            \end{bmatrix} \\\notag
            &~~+\alpha(x) \begin{bmatrix}
            	\mathbb{X}_{1,T}\mathbb{X}_{1,T}^\top - T\delta\mathds{I}_n & -\mathbb{X}_{1,T}\hat{\mathbb{R}}^\top & 0\\
            	\star & \hat{\mathbb{R}}\hat{\mathbb{R}}^\top & 0\\
            	\star & \star & 0
            \end{bmatrix} \\\label{subeq: lem-dec}
            &~~- \varpi^\top(x) \beta(x)\mathds{I}_{2n+N+\hat{N}}, 
\end{align}
    \end{subequations}
for some $\pi, \alpha(x) \in \mathbb{R}^+$, while condition~\eqref{H&P} is also fulfilled.
\end{lemma}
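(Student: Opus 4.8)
The plan is to reduce the SOS conditions in \eqref{eq: SOS conditions} to the matrix/semialgebraic conditions of Theorem~\ref{thm: main} on the relevant semialgebraic sets, and then invoke Theorem~\ref{thm: main} verbatim. The key observation is the standard SOS certificate of set containment: if $p(x) - \varpi^\top(x)\beta(x)$ is SOS and $\varpi(x)$ is an SOS vector, then $p(x) \geq 0$ on $\{x : \beta(x) \geq 0\}$, because on that set $\varpi^\top(x)\beta(x) \geq 0$, hence $p(x) \geq \varpi^\top(x)\beta(x) \geq 0$. Applying this to \eqref{subeq: lem-ini} with $p(x) = -x^\top P x + \gamma_1$ (writing $P = [\mathbb{L}(x)^\dagger \mathbb{R}_{0,T}\mathcal{H}(x)]^{-1}$ as in the theorem statement) yields $x^\top P x \leq \gamma_1$ on $X_0$, which is exactly \eqref{subeq: thm-ini}; applying it to \eqref{subeq: lem-uns} yields $x^\top P x \geq \gamma_2$ on $X_1$, which is \eqref{subeq: thm-uns}.

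For the matrix-valued condition \eqref{subeq: lem-dec}, I would use the matrix analogue of the same argument: the expression in \eqref{subeq: lem-dec} is a symmetric matrix polynomial that is required to be SOS (i.e., of the form $M(x)M(x)^\top$, hence positive semidefinite for all $x$), and it has the shape $-N(x) - \varpi^\top(x)\beta(x)\mathds{I}$, where $N(x)$ is the matrix on the left-hand side of \eqref{subeq: thm-dec}. Since $\varpi(x)$ is an SOS vector, on $\tilde{X} = \{x : \beta(x) \geq 0\}$ one has $\varpi^\top(x)\beta(x) \geq 0$, so $-N(x) \geq \varpi^\top(x)\beta(x)\mathds{I} + (\text{PSD}) \geq 0$ as matrices, giving $N(x) \leq 0$ for all $x \in \tilde{X}$, which is precisely \eqref{subeq: thm-dec}. (One should note that $\tilde{X}$ here is the set $X$ cut out by $\beta(x) \geq 0$; the definition $\tilde{X} = \{x \in X : \mathcal{B}(x) < \gamma_2\}$ in Definition~\ref{def: R-CBC} is a subset of this, so enforcing the inequality on the larger set is sufficient — alternatively one appends the constraint $\gamma_2 - \mathcal{B}(x) \geq 0$ to $\beta$, which the construction implicitly allows.) Together with the hypothesis $c \leq \gamma_2(1-\lambda)$ carried over directly, and condition \eqref{H&P} assumed to hold, all hypotheses of Theorem~\ref{thm: main} are met.

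Finally, I would invoke Theorem~\ref{thm: main}: it then follows immediately that $\mathcal{B}(x) = x^\top[\mathbb{L}(x)^\dagger \mathbb{R}_{0,T}\mathcal{H}(x)]^{-1}x$ is an R-CBC and $u = \mathbb{U}_{0,T}\mathcal{H}(x)[\mathbb{L}(x)^\dagger \mathbb{R}_{0,T}\mathcal{H}(x)]^{-1}x$ is its associated R-SC, with $\rho = (1+\tfrac{1}{\pi})\Vert\sqrt{P}\Vert^2$ and $c = (1+\tfrac{1}{\pi})\Vert\sqrt{P}\Vert^2\delta$, exactly as claimed. The routine bookkeeping is checking that the dimension $2n+N+\hat{N}$ of the identity block in \eqref{subeq: lem-dec} matches the block size in \eqref{subeq: thm-dec} after the dilation, and that the SOS multiplier vectors have the right lengths relative to $\beta_0,\beta_1,\beta$.

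The main obstacle is essentially conceptual rather than computational: one must be careful that the SOS relaxation is only \emph{sufficient} (Positivstellensatz-type certificates need not exist even when the polynomial inequalities hold), so the lemma is correctly stated as a one-directional implication, and I would make sure the proof does not accidentally claim equivalence. A secondary subtlety is the matrix-SOS-to-pointwise-PSD step: an SOS matrix polynomial $\Sigma(x) = \sum_i M_i(x)M_i(x)^\top$ is globally PSD, and the linear term $\varpi^\top(x)\beta(x)\mathds{I}$ must be handled by noting $\varpi^\top(x)\beta(x)$ is a scalar that is nonnegative on $\tilde X$, so it scales the identity into a PSD matrix there — this is where the "$\varpi(x)$ is an SOS vector" hypothesis is used, and it should be stated explicitly rather than glossed over.
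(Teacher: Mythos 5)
Your proof is correct and follows essentially the same route as the paper's: use the SOS multipliers $\varpi_0,\varpi_1,\varpi$ to certify that conditions~\eqref{subeq: thm-ini}--\eqref{subeq: thm-dec} hold on $X_0$, $X_1$, and $X\supset\tilde X$ respectively, then invoke Theorem~\ref{thm: main}. Your added remarks on the matrix-SOS-to-pointwise-PSD step and on the one-directional nature of the relaxation are sound elaborations of what the paper states more tersely.
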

\begin{proof}
Since $\varpi_0(x)$ is an SOS polynomial, it follows that $\varpi_0^\top(x) \beta_{0}(x) \geq 0$ for all $x \in X_0 = \{x \in \mathbb{R}^{n} \mid \beta_{0}(x) \geq 0\}$. Since $\mathcal{B}(x) = x^\top \left[\mathbb{L}(x)^\dagger \mathbb{R}_{0,T} \mathcal{H}(x)\right]^{-1} x$, where  $P = \left[\mathbb{L}(x)^\dagger \mathbb{R}_{0,T} \mathcal{H}(x)\right]^{-1} \succ 0$, is also a non-negative SOS polynomial, it can be concluded that if condition~\eqref{subeq: lem-ini} is satisfied, condition~\eqref{subeq: thm-ini} will also hold. The same reasoning applies to conditions~\eqref{subeq: lem-uns} and~\eqref{subeq: thm-uns}.
Next, we show that condition~\eqref{subeq: thm-dec} is also satisfied. Since $\varpi(x)$ is an SOS polynomial, it implies that $\varpi^\top(x) \beta(x) \geq 0$ for all $x \in X = \{x \in \mathbb{R}^{n} \mid \beta(x) \geq 0\}$ . Given that condition~\eqref{subeq: lem-dec} is also an SOS polynomial, one has
\begin{align*}
\text{condition}~\eqref{subeq: lem-dec} \geq 0.
\end{align*}
Thus, since $ \tilde{X} \subset X$, satisfying~\eqref{subeq: lem-dec} ensures that condition~\eqref{subeq: thm-dec} , completing the proof.  
\end{proof}

We introduce Algorithm~\ref{alg}, which details the required process for designing an R-CBC and its R-SC.

\begin{remark}\label{asset}
To effectively enforce conditions~\eqref{subeq: lem-ini}-\eqref{subeq: lem-dec}, one can leverage established software tools from the literature, such as \textsf{SOSTOOLS} \cite{prajna2004sostools}, in conjunction with semi-definite programming (SDP) solvers like \textsf{SeDuMi} \cite{sedumi}.
\end{remark}

\begin{remark}\label{Choice}
It is worth highlighting that the exact choice of \( \mathcal{R}(x) \) is not required in our framework. Instead, we construct a dictionary for $\mathcal{R}(x)$ based on its maximum degree, encompassing all potential terms in the true system's dynamics, even if some of those terms prove to be redundant (cf. monomials in both case studies).
\end{remark}

\begin{algorithm}[t!]
    \caption{Data-driven design of R-CBC and its R-SC}\label{alg}
    \begin{algorithmic}[1]
        \REQUIRE Safety specification $\Upsilon = (X_0, X_1)$, $\delta$ as in \eqref{assum}, a maximum degree for $\mathcal{R}(x)$ (cf. Remark~\ref{Choice}), and a choice of $\mathcal G(x)$
        \STATE Collect $\mathbb{U}_{0,T},\mathbb{X}_{0,T},  \mathbb{W}_{0,T}, \mathbb{X}_{1,T}$ as in~\eqref{eq: data}
        \STATE Form $\mathbb{G}_{0,T}, \mathbb{R}_{0,T}$ as in~\eqref{eq:V_0_T11} (cf. Remarks~\ref{full-row-rank}), and $ \mathbb{L}(x)$~\eqref{transform}
        \STATE Initialize $\lambda \in (0,1]$ and $\pi \in \mathbb{R}^{+}$\label{init-lambda}
        \STATE Solve~\eqref{subeq: lem-dec} and~\eqref{H&P} using \textsf{SeDuMi} and \textsf{SOSTOOLS}~(cf. Remark~\ref{asset}) for {$P$\footnotemark[1] and $\mathcal{H}(x)$}
        \STATE Construct $\mathcal{B}(x)=x^\top P x$ using $P$, and compute $c = (1 + \frac{1}{\pi})\Vert\sqrt{P}\Vert^2\delta$
        \STATE Compute level sets~$\gamma_1$ and $\gamma_2$ according to \eqref{subeq: thm-ini} and \eqref{subeq: thm-uns}
        \ENSURE Guaranteed robust safety over an infinite time horizon with $c \leq  \gamma_2 (1-\lambda)$, R-CBC~$\mathcal{B}(x) = x^\top \left[\mathbb{L}(x)^\dagger \mathbb{R}_{0,T} \mathcal{H}(x)\right]^{-1} x$, R-SC $
        u\!=\! \mathbb{U}_{0,T} \mathcal{H}(x) \! \left[\mathbb{L}(x)^\dagger \mathbb{R}_{0,T} \mathcal{H}(x)\right]^{-1} \! x   
        $
    \end{algorithmic}
\end{algorithm}
\footnotetext[1]{To satisfy condition~\eqref{subeq: lem-dec}, we set ${Z} = P^{-1}$, ensuring ${Z} \succ 0$. Once ${Z}$ is designed, $P$ is obtained as $P={Z}^{-1}$.}
\section{Case Studies}
To assess the efficacy of our data-driven approach, we applied it to two case studies: an academic system~\cite{6509432} and a Lorenz system~\cite{lopez2019synchronization}, both with \emph{unknown dynamics} and subjected to \emph{unknown-but-bounded} disturbances. Details of each case study are presented below. All simulations were performed on a $\mathsf{MacBook}$ equipped with an $\mathsf{M2\,chip}$ and $\mathsf{32\,GB}$ of memory.

\textbf{Academic system.} Consider the following dt-IAPS
\begin{equation}\label{exp:academic}
\Theta\!:\!\begin{cases}
{x}_1^+=x_1 + \tau(-x_1+x_1x_2+x_2 u) + w_1 \\
{x}_2^+=x_2 + \tau(x_1+2 x_2+x_1^2+x_1^2 x_2+u) + w_2,
\end{cases}
\end{equation}
with $\tau = 0.002$ being the sampling time, which is in the form of \eqref{sys} as 
\begin{align*}
A \!&=\! \begin{bmatrix}
  1-\tau & 0 & 1  & 0 & 0  \\
  \tau & 2\tau +1 & 0 & \tau & \tau
\end{bmatrix}\!\!,\:B \!=\! \begin{bmatrix}
  \tau & 0   \\
  0 & \tau
\end{bmatrix}\!\!,\\\mathcal{R}(x)\!&=\!\begin{bmatrix}
  x_1; x_2; x_1 x_2; x_1^2; x_1^2x_2  
\end{bmatrix}\!\!,\: \mathcal{G}(x)\!=\!\begin{bmatrix}
     x_2\\
    1
\end{bmatrix}\!\!.
\end{align*}
We assume that matrices $A, B$,  and \( \mathcal{R}(x) \) and \( \mathcal{G}(x) \) are unknown, with only the maximum degree of $3$ provided for \( \mathcal{R}(x) \)  and \( \mathcal{G}(x) \). Now we construct two dictionaries for \( \mathcal{R}(x) \) and \( \mathcal{G}(x) \) to encompass all possible combinations of monomials up to degree of $3$ as $\mathcal{R}(x)= [x_1;\! x_2; x_1 x_2; x_1^2 x_2; x_1^2; x_2^2;x_2^3;x_1 x_2^2;x_1^3]$ and $\mathcal{G}(x)= [1;x_1;\! x_2; x_1 x_2; x_1^2 x_2; x_1^2; x_2^2;x_2^3;x_1 x_2^2;x_1^3]$. The upper bound for the norm of disturbance $w$ is given as \( \delta = 2 \times 10^{-4} \). The primary aim is to design an R-CBC and its associated R-SC for the system in \eqref{exp:academic}.

The regions of interest are given as $X = [-10, 10]^2$, $X_0 = [-1, 1]^2$, $X_1 = [-5, -3]^2 \cup [-5, -3] \times [3,  5] \cup[2, 5]^2 $. We follow the steps of Algorithm~\ref{alg} by collecting input-state trajectories over the horizon of $T=14$. Then we compute $\mathcal{H}(x)$ 
and $P$ as\footnotemark[2]
\begin{equation*}
P = 10^{4} \times \begin{bmatrix}
7.6150 & 0.4636\\ 0.4636 & 5.8190  \end{bmatrix}\!\!.    
\end{equation*}
\footnotetext[2]{Matrix $\mathcal{H}(x)$ is not reported due to lack of space.}
Given the constructed \( \mathcal{B}(x) \) based on the matrix $P$ and compliance with the conditions in~\eqref{subeq: lem-ini} and~\eqref{subeq: lem-uns},  we design \(\gamma_1 =    1.4361\times 10^{5}\), and \(\gamma_2 =  5.7445 \times 10^{5}\),  while ensuring \( c \leq  \gamma_2 (1-\lambda) \) in~\eqref{eq: climit}, with \( \pi = 10^{-5} \), \( \lambda = 0.99 \), $\rho =    1.4257 \times 10^{5}$, and \( c =   28.5147\). Then 
\begin{equation*}
\mathcal{B}(x)= 76150.4146\,x_{1}^{2} + 9272.7485\,x_{1}x_{2} + 58190.0147\,x_{2}^{2},
\end{equation*}
is an R-CBC for the dt-IAPS with its R-SC as
\begin{align}\notag
u &= - 17.0273\,x_{1}^{2} - 18.2178\,x_{1}x_{2}\\\label{controller-data}&~~~ - 67.1658\,x_{2}^{2} - 64.5998\,x_{1} - 179.8947\,x_{2}.
\end{align}
Then, according to Theorem \ref{thm: model-based}, we guarantee that the unknown dt-IAPS is \emph{robustly} safe against the unknown-but-bounded disturbance over an infinite time horizon. The simulation results for this case study are provided in Figure~\ref{fig:b1}. The computation time for solving our conditions was $7$ seconds.

\begin{figure}[t!]
	\centering
 \includegraphics[width=0.51\linewidth]{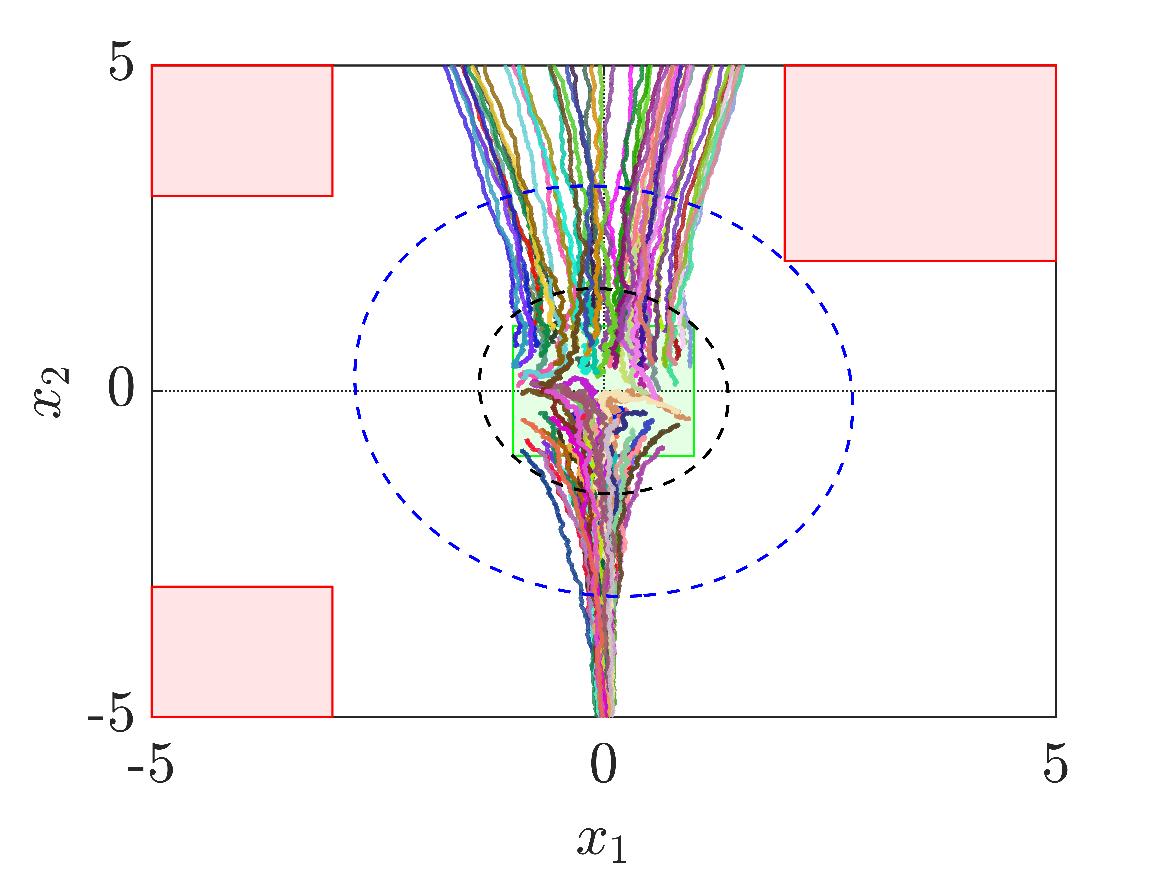}\hspace{-0.3cm}
\includegraphics[width=0.51\linewidth]{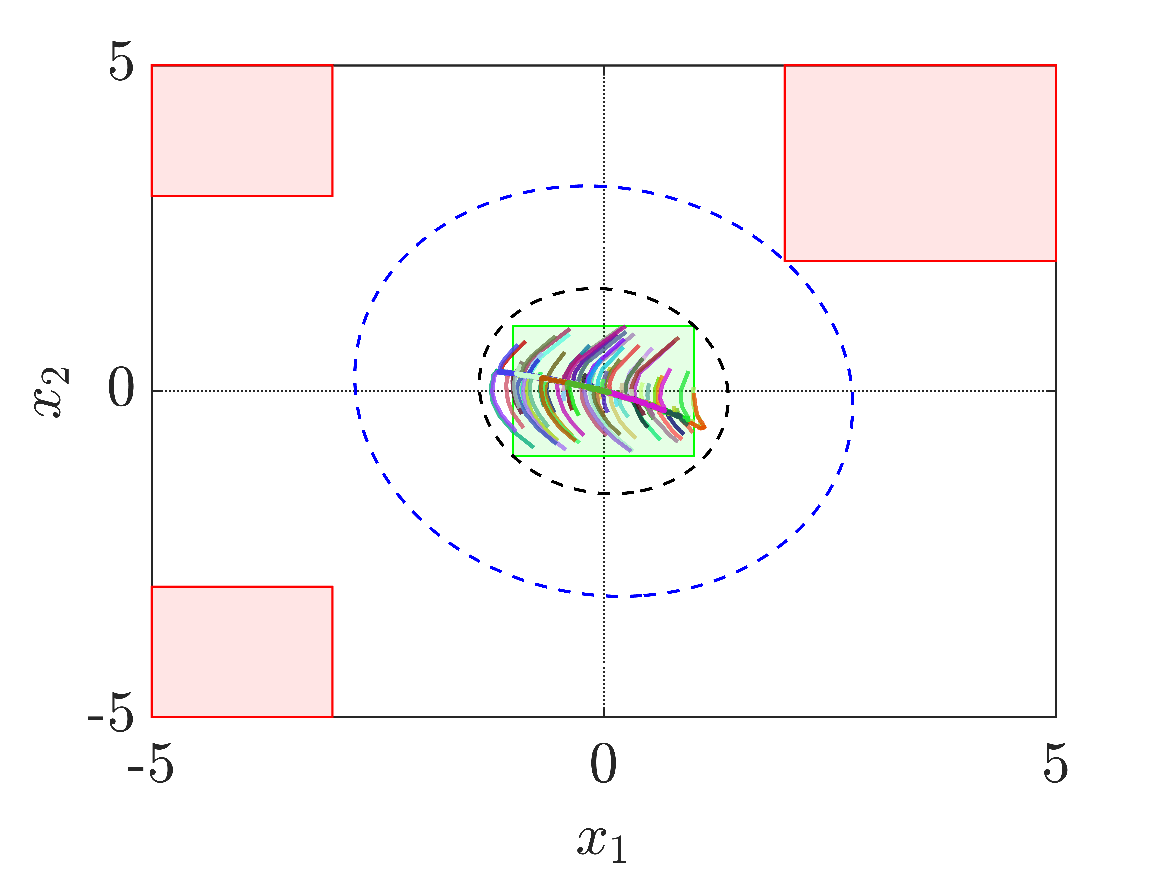}
\includegraphics[width=0.65\linewidth]{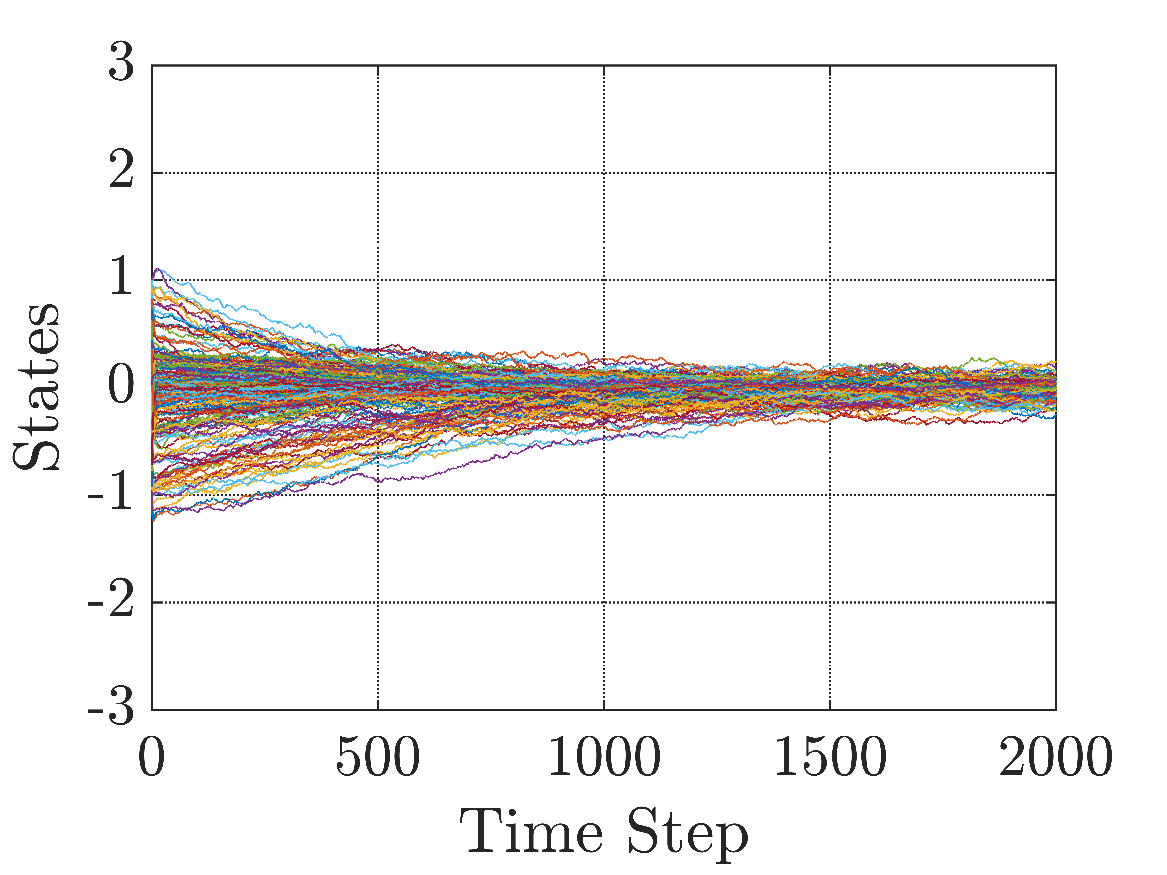}
	\caption{$100$ state trajectories of unknown system~\eqref{exp:academic} without~(\textbf{top-left figure}) and with~(\textbf{top-right figure}) the designed controller~\eqref{controller-data}, starting from different initial conditions in~$X_0 \in  [-1, 1]^2 $. The simulations are generated with $100$ different arbitrary disturbance trajectories satisfying \eqref{assum}, indicating the robustness of our framework to disturbances. Initial~$X_0$ and unsafe~$X_1$ regions are depicted by green \protect\greensquare\ and red \protect\redsquare\ boxes, while $\mathcal B(x) = \gamma_1$ and $\mathcal B(x) = \gamma_2$ are indicated by~\protect\dashedline{0.73cm} and~~\protect\dashedlinea{0.73cm}. The \textbf{bottom figure} indicates trajectories of the unknown system over a time horizon of $2000$, which fulfill the given safety property.}
	\label{fig:b1}
\end{figure}

\textbf{Lorenz system.} We apply our data-driven approach to a 3-dimensional Lorenz system, a classic \emph{chaotic} model widely employed to study complex behaviors in nonlinear systems. The dynamics are described by 
\begin{equation}\label{exp:lorenz}
\Theta\!:\!\begin{cases}
{x}_1^+=x_1+ \tau(10x_2 -10x_1) + w_1 \\
{x}_2^+=x_2 + \tau(28x_1 - x_2- x_1 x_3+u) + w_2,\\
{x}_3^+=x_3+ \tau(x_1x_2 - \frac{8}{3}x_3) + w_3,
\end{cases}
\end{equation}
with $\tau = 0.009$ being the sampling time, which is in the form of \eqref{sys} as 
\begin{align*}
A \!&=\!\! \begin{bmatrix}
  1-10\tau \!&\! 10\tau \!&\! 0  \!&\! 0 \!&\! 0  \\
  28\tau \!&\! 1-\tau \!&\! 0 \!&\! -\tau \!&\! 0 \\
    0 \!&\! 0 \!&\! 1-\frac{8}{3}\tau \!&\! 0 \!&\! \tau
\end{bmatrix}\!\!,~~
B \!=\! \begin{bmatrix}
   0 \\
   \tau\\
    0 
\end{bmatrix}\!\!,
\\\mathcal{R}(x)\!&=\!\begin{bmatrix}
  x_1; x_2; x_3; x_1x_3; x_1x_2  
\end{bmatrix}\!\!,~~\mathcal{G}\!=\!1.
\end{align*}
Here, we assume that matrices $A,B$ and  \( \mathcal{R}(x) \), are unknown, with only the maximum degree of $2$ provided for \( \mathcal{R}(x) \) and $\mathcal{G}\!=\!1$. We structure a dictionary for \( \mathcal{R}(x) \) to encompass all possible combinations of monomials up to degree of $2$ as $\mathcal{R}(x) = [ x_1; x_2; x_3; x_1x_2; x_2x_3; x_1x_3; x_1^2; x_2^2; x_3^2 ]$. The upper bound for the norm of disturbance $w$ is given as \( \delta = 3\times10^{-4} \).

The regions of interest are given as $X = [-10, 10]^3$, $X_0 = [-1.5, 1.5]^3$, and $X_1 = [2.5, 5] \times[-5, -2.5]^2  \cup [-5, -2.5]^3 \cup [2, 5]^2 \times [3.5, 5] $. Following the steps of Algorithm~\ref{alg}, we  gather input-state trajectories~\eqref{eq:U_0_T}-\eqref{eq:X_1_T} over a horizon $T = 15$. Subsequently, we compute $\mathcal{H}(x)$ and $P$, with $P$ reported as
\begin{equation*}
P = 10^{4}\times\begin{bmatrix}
6.0470 & -0.265 & 0.1071\\ -0.2650 & 3.7010 & -0.2609\\ 0.1071 & -0.2609 & 11.53 \end{bmatrix} \!\!. 
\end{equation*}
Given the constructed \( \mathcal{B}(x) \) based on the matrix $P$ and according to conditions in~\eqref{subeq: lem-ini} and~\eqref{subeq: lem-uns},  we design \(\gamma_1 =  5.0712 \times 10^{5}\), and \(\gamma_2 =    1.2772 \times 10^{6}\), while ensuring \( c \leq  \gamma_2 (1-\lambda) \) in~\eqref{eq: climit}, with \( \pi = 10^{-5} \), \( \lambda = 0.99 \), $\rho = 1.6841 \times 10^{5}$, and \( c =   50.5236\). Then
\begin{align*}
\mathcal{B}(x)\!&=\! 60468.2598\,x_{1}^{2} - 5299.1718\,x_{1}x_{2} + 2141.9582\,x_{1}x_{3}\\&~~~ + 37008.6216\,x_{2}^{2} - 5218.5449\,x_{2}x_{3}\\&~~~ + 115251.6783\,x_{3}^{2},
\end{align*}
is an R-CBC for the unknown system with its R-SC as
\begin{align}\notag
u \!&=\!4.0446\,x_{1}^{2} + 11.3915\,x_{1}x_{2} + 21.2063\,x_{1}x_{3}\\\notag &~~~ - 2.5507\,x_{2}^{2} + 53.2232\,x_{2}x_{3} - 2.4537\,x_{3}^{2}\\ \label{controller-data-1} &~~~ - 17.482\,x_{1} - 92.7513\,x_{2} - 13.281\,x_{3}.
\end{align}
According to Theorem \ref{thm: model-based}, we guarantee that the Lorenz system is \emph{robustly} safe against the unknown-but-bounded disturbance over an infinite time horizon. Under the designed robust safety controller, all trajectories of the Lorenz system remain within the safe set $X \backslash X_1$, as can be seen in Figure~\ref{fig:b2}. It is worth noting that our proposed conditions were solved in just $1.8$ seconds, demonstrating their scalability.
 
\begin{figure}[t!]
\centering
 \includegraphics[width=0.50\linewidth]{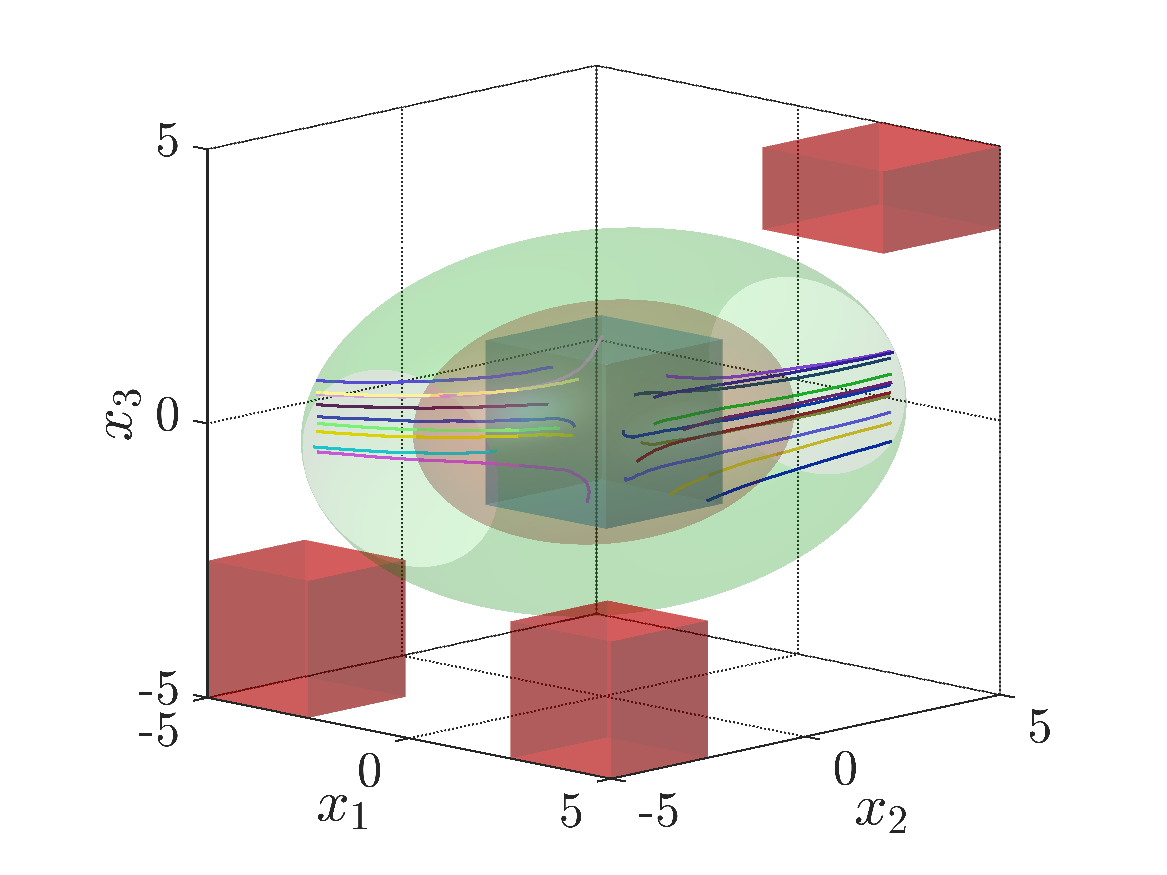}\hspace{-0.3cm}
\includegraphics[width=0.50\linewidth]{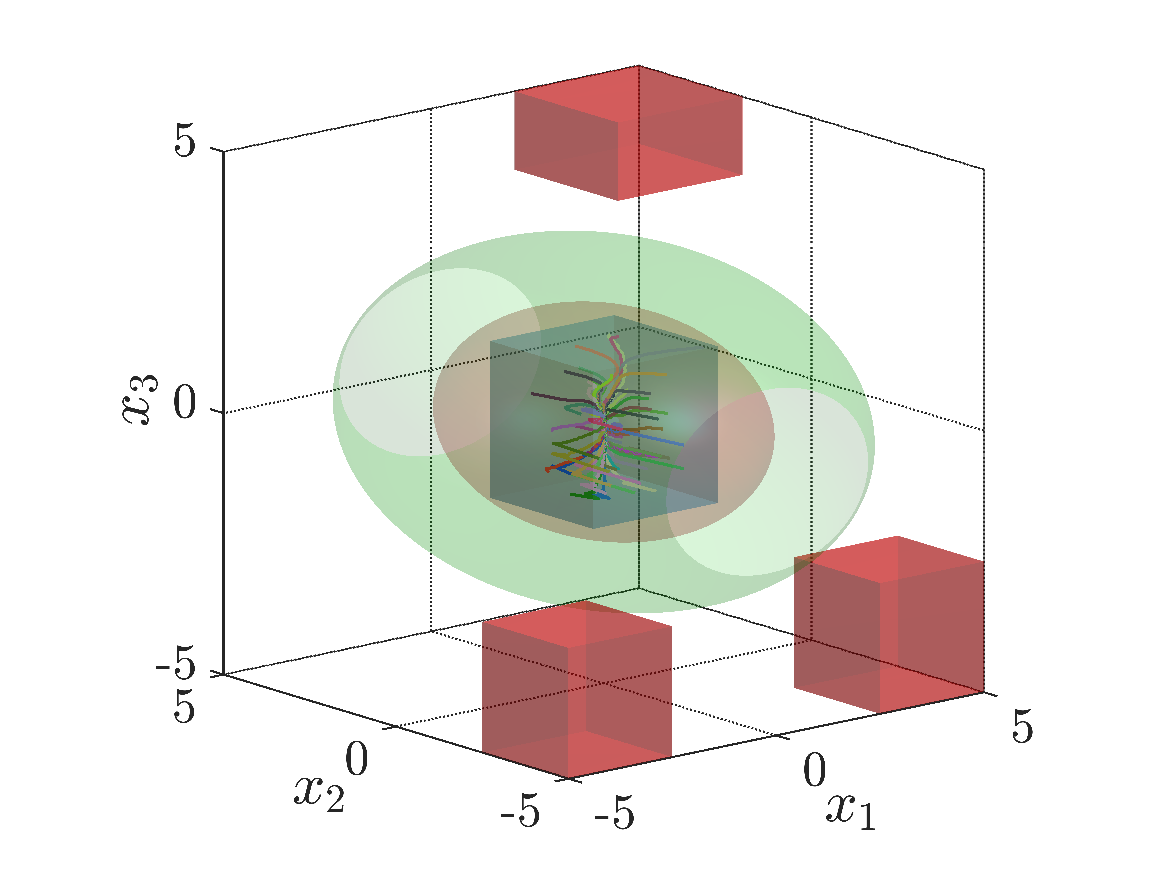}
 \includegraphics[width=0.65\linewidth]{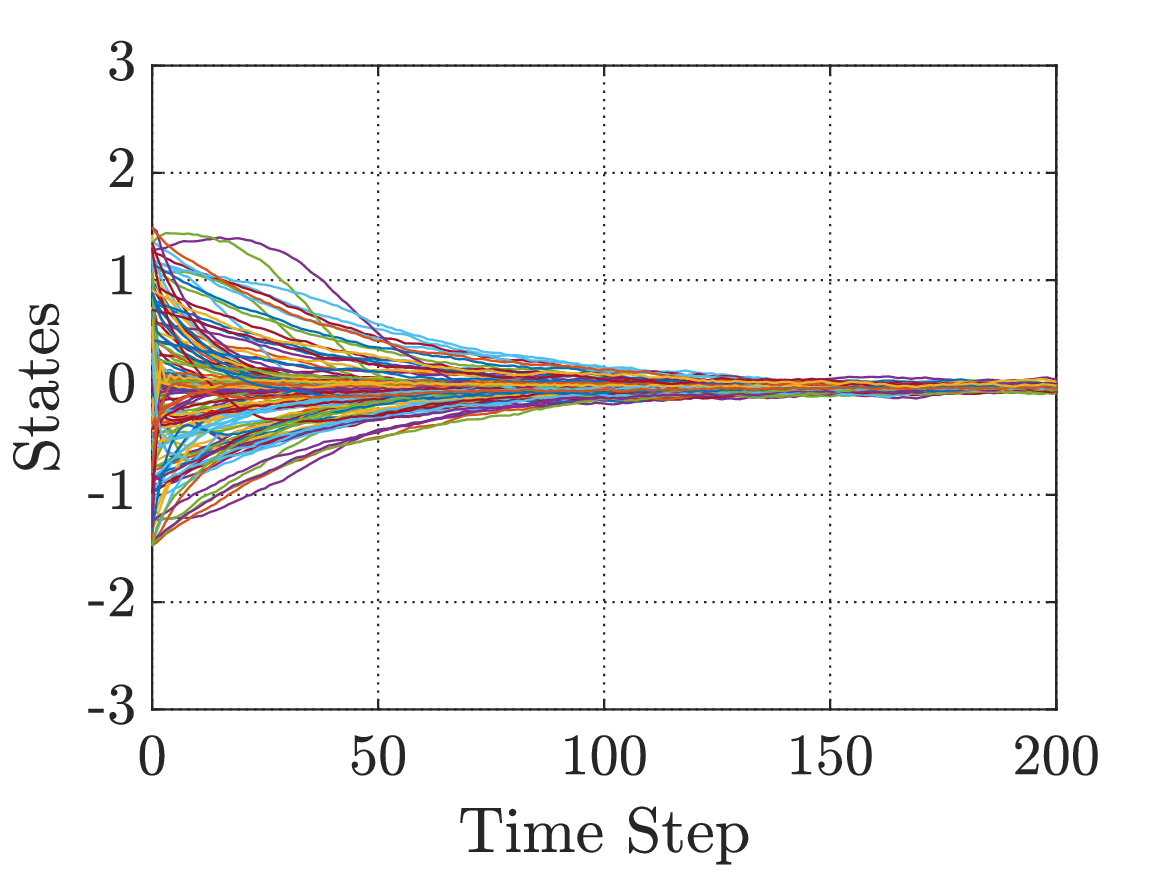}
 	\caption{$50$ state trajectories of unknown Lorenz system~\eqref{exp:lorenz} without~(\textbf{top-left figure}) and with~(\textbf{top-right figure}) designed controller~\eqref{controller-data-1}, starting from different initial conditions in~$X_0 \in  [-1.5, 1.5]^2$. The simulations are performed using $50$ distinct arbitrary disturbance trajectories satisfying \eqref{assum}, demonstrating the robustness of our framework to disturbances. Initial~$X_0$ and unsafe~$X_1$ regions are depicted by blue \protect\bluesquare\ and red \protect\reddsquare\ boxes, while $\mathcal {B}(x) = \gamma_1$ and $\mathcal {B}(x) = \gamma_2$ are indicated by~\protect\greensquare\ and~\protect\redsquare\,, respectively. The \textbf{bottom figure} indicates trajectories of the unknown system over a time horizon of $200$, consistent with the safety property.
 }
\label{fig:b2}
\end{figure}
\section{Conclusion}
We developed a data-driven approach for learning robust control barrier certificates (R-CBCs) and their robust safety controllers (R-SCs) for discrete-time input-affine polynomial systems with unknown dynamics affected by unknown-but-bounded disturbances. By utilizing input-state data over a finite period and ensuring a rank condition for persistent excitation, we directly synthesized R-CBCs and R-SCs from the data, guaranteeing robust system safety without explicit dynamic modeling. We framed the approach as a sum-of-squares (SOS) optimization problem, offering a structured and computationally efficient design framework. Two case studies demonstrated the method's effectiveness in providing robust safety guarantees for unknown input-affine polynomial systems subjected to \emph{bounded disturbances}. Extending our robust safety concept to a broader class of nonlinear systems with unknown dynamics beyond polynomials is being explored as a future direction.

\bibliographystyle{ieeetr}
\bibliography{biblio}

\end{document}